\newtheorem{theorem}{Theorem}[section]
\newtheorem{corollary}[theorem]{Corollary}
\newtheorem{proposition}[theorem]{Proposition}
\newtheorem{conjecture}{Conjecture}
\begin{document}

\markboth{ }{}
\title{\bf Equitable Colorings of Corona Multiproducts of Graphs\footnote{Project has been 
partially supported by Narodowe Centrum Nauki under contract 
DEC-2011/02/A/ST6/00201}}
\date{}
\author{Hanna Furmańczyk\footnote{Institute of Informatics,\ University of Gdańsk,\ Wita Stwosza 57, \ 80-952 Gdańsk, \ Poland. \ e-mail: hanna@inf.ug.edu.pl},  \ Marek Kubale \footnote{Department of Algorithms and System Modelling,\ Technical University of Gdańsk,\ Narutowicza 11/12, \ 80-233 Gdańsk, \ Poland. \ e-mail: kubale@eti.pg.gda.pl} \ Vahan V. Mkrtchyan \footnote{Department of Informatics and Applied Mathematics,\ Yerevan State University, Armenia. e-mail: vahanmkrtchyan2002@ipia.sci.am}}

\markboth{H. Furmańczyk, M. Kubale, V. Mkrtchyan}{Equitable Colorings of Corona Multiproducts of Graphs}

\maketitle

\begin{abstract}
A graph is equitably $k$-colorable if its vertices can be partitioned into $k$ independent 
sets in such a way that the number of vertices in any two sets differ by at most one. 
The smallest $k$ for which such a coloring exists is known as the \emph{equitable chromatic 
number} of $G$ and denoted $\chi _{=}(G)$.
It is known that this problem is NP-hard in general case and remains so for corona graphs. In \cite{prod} Lin i Chang studied equitable coloring of Cartesian products of graphs. In this paper we consider the same model of coloring in the case of corona products of graphs. In particular, we obtain some results regarding the equitable chromatic number for $l$-corona product $G \circ ^l H$, where $G$ is an equitably 3- or 4-colorable graph and $H$ is an $r$-partite graph, a path, a cycle or a complete graph.  Our proofs are constructive in that they lead to polynomial algorithms for equitable coloring of such graph products provided that there is given an equitable coloring of $G$. Moreover, we confirm Equitable Coloring Conjecture for corona products of such graphs. This paper extends our results from \cite{hf}.
\end{abstract}

{\bf Keywords:} {corona graph, equitable chromatic number, equitable coloring conjecture, equitable graph coloring, NP-completeness, polynomial algorithm,  multiproduct of graphs}

\section{Introduction}
All graphs considered in this paper are finite, connected and simple, i.e. undirected, loopless and without multiple edges.

If the set of vertices of a graph $G$ can be partitioned into $k$ (possibly empty) classes $V_1,V_2,....,V_k$ such that each $V_i$ is an independent set and the condition $||V_i|-|V_j||\leq 1$ holds for every pair ($i, j$), then $G$ is said to be {\it equitably k-colorable}. In the case, where each color is used the same number of times, i.e. $|V_i|=|V_j|$ for every pair ($i, j$), graph $G$ is said to be \emph{strongly equitably $k$-colorable}. The smallest integer $k$ for which $G$ is equitably $k$-colorable is known as the {\it equitable chromatic number} of $G$ and denoted $\chi_{=}(G)$ \cite{meyer}. 
Since equitable coloring is a proper coloring with additional condition, the inequality $\chi(G) \leq \chi_=(G)$ holds for any graph $G$.

In some discrete industrial systems we can encounter the problem of equitable 
partitioning of a system with binary conflict relations into conflict-free 
subsystems. Such situations can be modeled by means of equitable graph coloring. 
For example, in the garbage collection problem the vertices of the graph represent garbage collection routes and a pair of vertices is joined by an edge if the corresponding routes should not be run on the same day. 
The problem of assigning one of the six days of the work week to each route thus 
reduces to the problem of 6-coloring of the graph \cite{meyer}. 
In practice it might be 
desirable to have an approximately equal number of routes run on each of the six 
days. So we have to color the graph in an equitable way with six colors. Other 
applications of equitable coloring can be found in scheduling and timetabling. 

The notion of equitable colorability was introduced by Meyer \cite{meyer}. 
However, an earlier work of Hajnal and Szemer\'edi \cite{hfs:haj} showed that 
a graph $G$ with maximal degree $\Delta$ is equitably $k$-colorable 
if $k\geq\Delta+1$. Recently, Kierstead et al. \cite{fast} have given an $O(\Delta |V(G)|^2)$-time algorithm for equitable $(\Delta+1)$-coloring of graph $G$. In 1973, Meyer \cite{meyer} formulated the following 
conjecture:

\begin{conjecture} [Equitable Coloring Conjecture (ECC)]
For any connected graph $G$ other than a complete graph or an odd cycle, 
$\chi_{=}(G)\leq\Delta$.
\end{conjecture}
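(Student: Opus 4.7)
This is Meyer's \emph{Equitable Coloring Conjecture}, which has remained open in full generality since 1973, so the best I can offer is a plausible line of attack together with an indication of where it breaks down. The natural starting point is the Hajnal--Szemer\'edi theorem cited just above the statement, which already guarantees an equitable $(\Delta+1)$-coloring of every graph; the plan is then to save one colour by a local recolouring argument, unless $G$ happens to be one of the two excluded extremal families.

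Concretely, I would fix an equitable $(\Delta+1)$-coloring $V_1,\dots,V_{\Delta+1}$ with $|V_1|$ as small as possible and try to empty $V_1$. For each $v\in V_1$, since $\deg(v)\le\Delta$, there must exist an index $j\neq 1$ such that $V_j$ contains no neighbour of $v$, so $v$ can be legally moved to $V_j$. The first step would therefore be a greedy elimination of colour $1$; the complication is that the transfers interact, and performing them all at once may destroy the balance condition $||V_i|-|V_j||\le 1$, so one needs to sequence them carefully and repair any imbalance as it arises.

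The second step, which is the heart of the matter, would be a Kempe-chain or augmenting-path procedure that restores equitability after each transfer: whenever moving $v$ into $V_j$ overfills $V_j$, one swaps a suitably chosen vertex of $V_j$ into a smaller class and propagates the swap through a bichromatic subgraph until the sizes are rebalanced. One would then try to show that the only configurations in which no such chain exists force $G$ to contain $K_{\Delta+1}$ as a full component, or (when $\Delta=2$) to coincide with an odd cycle. The main obstacle lies precisely here: when several classes are simultaneously tight and the $(\Delta+1)$-coloring is combinatorially rigid, no purely local swap improves the imbalance, and a genuinely global exchange seems to be required. This is exactly why the conjecture has so far been verified only for restricted families, and why in the present paper I would not attack ECC in full but rather exploit the special product structure of $G\circ^{l}H$, where the attached copies of $H$ provide enough slack to carry out the recolouring constructively.
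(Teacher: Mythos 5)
You have correctly identified the essential point: the statement is Meyer's conjecture, stated in the paper as a \emph{conjecture}, and the paper contains no proof of it. There is therefore nothing in the paper to compare your argument against in the strict sense. What the paper actually does with ECC is much more modest than what you sketch: it verifies the conjecture only for the graphs it constructs, and it does so not by any recolouring or Kempe-chain argument but by a trivial numerical comparison. For instance, after Proposition \ref{corcomplete} the authors observe that $\chi_{=}(G\circ^{l}K_m)=m+1$ while $\Delta(G\circ^{l}K_m)=\Delta(G)+m\cdot l\geq m+1$, so ECC holds for these coronas; the same pattern (an explicit equitable colouring with few colours, compared against a large maximum degree inherited from the corona construction) recurs throughout Sections \ref{sek_rpartite}--\ref{sek_path}. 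Your closing remark that the product structure ``provides enough slack to carry out the recolouring constructively'' is exactly the right intuition, although in the paper the slack is exploited by direct construction of the colour classes rather than by repairing a $(\Delta+1)$-colouring.

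As for the attack you outline on the general conjecture: it is a reasonable summary of the known difficulty, and you are right that the greedy-transfer-plus-rebalancing scheme is where all attempts stall. One small caution: your first step claims that for each $v\in V_1$ there is a class $V_j$, $j\neq 1$, containing no neighbour of $v$; this uses $\deg(v)\leq\Delta$ against the $\Delta$ remaining classes and only guarantees such a $j$ when some class is missed, i.e.\ when $v$ does not have exactly one neighbour in each of the other $\Delta$ classes. That failure case is precisely the rigid configuration you later worry about, so the gap appears one step earlier than your write-up suggests. Since you make no claim to have closed it, there is no error to report---only the observation that neither you nor the paper proves the statement, because it remains open.
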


This conjecture has been verified for all graphs on six or fewer vertices. 
Lih and Wu \cite{kp} proved that the Equitable Coloring Conjecture is true for all bipartite graphs. Wang and Zhang \cite{hfs:wang} considered a broader class of graphs, namely $r$-partite graphs. They proved that Meyer's conjecture is true for complete graphs from this class. Also, the conjecture (or even the stronger one) was confirmed for outerplanar graphs \cite{hfs:yap} and planar graphs with maximum degree at least 13 \cite{planar}.

The \emph{corona} of two graphs $G$ and $H$ is a graph $G \circ H$ formed from one copy of $G$ and $|V(G)|$ copies of $H$ where the $i$th vertex of $G$ is adjacent to every vertex in the $i$th copy of $H$. For any integer $l \geq 2$, we define the graph $G \circ ^l H$ recursively from $G \circ H$ as $G \circ ^l H = (G \circ ^{l-1} H ) \circ H$ (cf. Fig.~\ref{rys1}). Graph $G \circ ^l H$ is also named as \emph{$l$-corona product} of $G$ and $H$. Such type of graph product was introduced by Frucht and Harary in 1970 \cite{frucht}. 

A straightforward reduction from graph coloring to equitable coloring by adding sufficiently many isolated vertices to a graph, proves that it is NP-complete to test whether a graph has an equitable coloring with a given number of colors (greater than two). Furmańczyk et al. \cite{hf} proved that the problem remains NP-complete for corona graphs. Bodlaender and Fomin \cite{treewidth} showed that equitable coloring problem can be solved to optimality in polynomial time for trees (previously known due to Chen and Lih \cite{cl}) and outerplanar graphs. A polynomial time algorithm is also known for equitable coloring of split graphs \cite{split}, cubic graphs \cite{cubic} and some coronas \cite{hf}. 

We will now briefly outline the remainder of the paper. In Section \ref{sek_com} we give an upper bound on the equitable chromatic number of $l$-corona product of graphs while in Section \ref{sek_rpartite} we give some results concerning the equitable colorability of $l$-corona products of some graphs and $r$-partite graphs. Next, in Section \ref{sek_cycle} we consider $l$-corona products of graphs $G$ with $\chi_=(G) \leq 4$ and cycles. In Section \ref{sek_path} we study $l$-corona products of these graphs $G$ and paths. In this way we extend the class of graphs that can be colored optimally in polynomial time and confirm the ECC conjecture \cite{hf}.

\begin{figure}[h!]
\begin{center}
\includegraphics[scale=0.9]{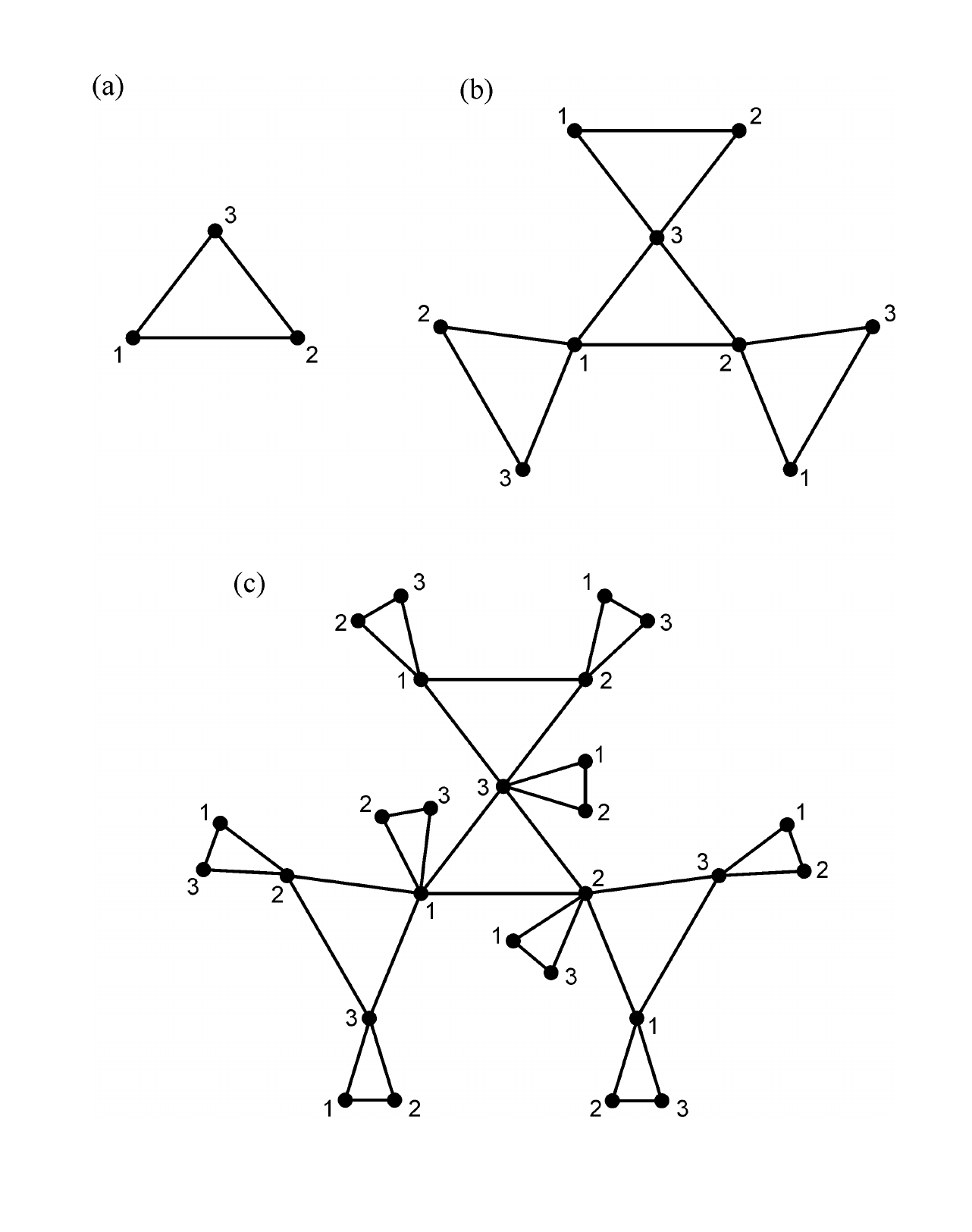}
\caption{Example of graphs: a) $C_3$; b) $C_3\circ K_2$; c) $C_3\circ^2 K_2$.}\label{rys1}
\end{center}
\end{figure}
\section{Equitable coloring of corona products with complete graphs}\label{sek_com}

It is known that $\chi_{=}(G \circ K_m)= m+1$ for every graph $G$ such that $\chi\left(G\right)\leq m+1$ \cite{hf}. Since the obtained graph $G \circ K_m$ is $(m+1)$-colorable, the graph $G \circ ^2 K_m$ is also equitably $(m+1)$-colorable, and so on. This result can be easy generalized to the $l$-corona product.

\begin{proposition}
If $G$ is a graph with $\chi\left(G\right)\leq m+1$, then  $\chi_{=}(G \circ ^l K_m)= m+1$, for any $l\geq 1$. \label{corcomplete}
\end{proposition}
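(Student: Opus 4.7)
The plan is to proceed by induction on $l$, using the cited result $\chi_{=}(G \circ K_m) = m+1$ (whenever $\chi(G) \leq m+1$) as the base case $l=1$, which is granted immediately.

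For the inductive step, I would assume the proposition for $l-1$, so that $\chi_{=}(G \circ^{l-1} K_m) = m+1$. Set $G' := G \circ^{l-1} K_m$. Because equitable colorings are in particular proper colorings, the inductive hypothesis yields the ordinary chromatic bound $\chi(G') \leq \chi_{=}(G') = m+1$. By the recursive definition of the multiproduct, $G \circ^{l} K_m = G' \circ K_m$, so the cited base result applies directly to $G'$ and gives $\chi_{=}(G' \circ K_m) = m+1$, closing the induction on the upper bound.

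For the matching lower bound, I would observe that $G \circ^{l} K_m$ contains $K_{m+1}$ as a subgraph: any vertex $v$ of the underlying copy of $G \circ^{l-1} K_m$ together with the attached copy of $K_m$ forms a clique of size $m+1$. Hence $\chi_{=}(G \circ^{l} K_m) \geq \chi(G \circ^{l} K_m) \geq \omega(G \circ^{l} K_m) \geq m+1$, completing the proof.

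There is no real obstacle here; the whole argument is a short induction, and the only point worth making explicit is that the inductive hypothesis must be stated for equitable colorability (not merely for proper colorability), so that the chromatic hypothesis $\chi(\cdot) \leq m+1$ needed to reapply the base case is available at each step.
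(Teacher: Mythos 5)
Your proposal is correct and follows essentially the same route as the paper, which likewise observes that $G\circ^{l-1}K_m$ is properly $(m+1)$-colorable (being equitably so) and then reapplies the cited result $\chi_{=}(G\circ K_m)=m+1$ inductively. The only difference is that you add an explicit clique-based lower bound, which is redundant since the cited base result already yields equality at each step, but it does no harm.
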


Let us note that since $G$ is connected, the maximum degree of the corona $\Delta(G \circ ^ l K_m)$ is equal to $\Delta(G)+m \cdot l$. Since $m+1 \leq \Delta(G) + m \cdot l$, so ECC is true for such graphs.

Let us also notice that we immediately get an upper bound on the equitable chromatic number:
$$\chi_{=}(G \circ ^l H)\leq m+1,$$
where $l \geq 1$, $\chi(G)\leq m+1$ and graph $H$ is of order $m$.

\section{Equitable coloring of corona graphs with $r$-partite graphs} \label{sek_rpartite}
In this section we consider corona products of graph $G$ and $r$-partite graphs, where $G$ fulfills some additional conditions.

\begin{theorem}
Let $G$ be an equitably $k$-colorable graph on $n$  vertices and let $H$ be a $(k-1)$-partite graph. If $k|n$, then for any $l \geq 1$
$$\chi_=(G \circ^l H) \leq k.$$ \label{lrpart}
\end{theorem}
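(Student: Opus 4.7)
The plan is to prove, by induction on $l$, the stronger statement that $G \circ^l H$ admits a \emph{strongly} equitable $k$-coloring in which every color class has exactly $n(h+1)^l/k$ vertices, where $h := |V(H)|$. The hypothesis $k \mid n$ guarantees that every such quantity is an integer.

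For the base case $l=1$, fix an equitable $k$-coloring $V_0, V_1, \dots, V_{k-1}$ of $G$ (each class of size $n/k$) and a $(k-1)$-partition $W_1, \dots, W_{k-1}$ of $H$, using $\{0,1,\dots,k-1\}$ as color labels. To each copy $H_v$ attached to a vertex $v \in V_i$ I assign colors by the cyclic shift $\pi_i(j) := (i+j) \bmod k$: every vertex of $W_j$ receives color $\pi_i(j)$. Because $\pi_i$ is a bijection from $\{1,\dots,k-1\}$ onto $\{0,\dots,k-1\} \setminus \{i\}$, the coloring of $H_v$ is proper, uses only colors different from $i$, and hence the full coloring of $G \circ H$ is proper.

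Counting the vertices of each color $c$ in $G \circ H$ gives $n/k$ contributed by $G$ itself, plus $\sum_{i \ne c} (n/k) \, |W_{(c-i) \bmod k}|$ contributed by the copies of $H$. As $i$ ranges over the $k-1$ values different from $c$, the index $(c-i) \bmod k$ hits every element of $\{1,\dots,k-1\}$ exactly once, so this second sum collapses to $(n/k)\sum_{j=1}^{k-1}|W_j| = (n/k)\,h$. Thus each color class has exactly $(n/k)(h+1)$ vertices, establishing strong equitability in the base case. The inductive step is formally identical: treat $G \circ^{l-1} H$, whose strongly equitable $k$-coloring with classes of size $n(h+1)^{l-1}/k$ is supplied by the induction hypothesis, as the new ``$G$'' and apply the same cyclic construction to the copies of $H$ attached to it.

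The only structural idea is the choice of the cyclic assignment $\pi_i$, selected precisely so that the multiset $\{\pi_i^{-1}(c) : i \ne c\}$ equals $\{1,\dots,k-1\}$ and the total contribution of the $H$-copies to each color $c$ becomes $(n/k)h$ \emph{independently} of the individual part sizes $|W_j|$. Without this device one would be forced to juggle the sizes of the parts of $H$ by hand; with it, the remainder of the argument is routine counting, so no serious obstacle is anticipated.
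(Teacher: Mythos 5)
Your proposal is correct and follows essentially the same route as the paper: an induction on $l$ whose base case colors the copy of $H$ attached to a vertex of color $i$ by a cyclic shift of the $(k-1)$ parts through the remaining $k-1$ colors, so that each color class picks up every part index exactly once and ends up with exactly $n(h+1)/k$ vertices. The paper's displayed definition of $V_1',\dots,V_k'$ is precisely your $\pi_i$ written out, and its inductive step likewise reapplies the base construction after observing that $k$ still divides $|V(G\circ^l H)|$.
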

\begin{proof}
The proof is by induction on $l$. 
\begin{description}
\item[\rm{Step 1:}] For $l=1$ the theorem holds due to the following.

Suppose $V(G)=V_1 \cup V_2 \cup \cdots \cup V_k$, where $V_1, \ldots, V_k$ are independent sets each of size $n/k$. This means that they form an equitable $k$-coloring of $G$.
For each vertex $z \in V(G)$, let $H_z = (X_1^z, \ldots, X_{k-1}^z, E^z)$ be the copy of $(k-1)$-partite graph $H=(X_1,\ldots, X_{k-1}, E)$ in $G \circ H$ corresponding to $z$. Let
$$
\begin{array}{l}
V_1' = V_1 \cup \bigcup _{z\in V_2} X_1^z \cup \cdots \cup \bigcup _{z \in V_k} X_{k-1}^z,\\
V_2' = V_2 \cup \bigcup _{z\in V_3} X_1^z \cup \cdots \cup \bigcup _{z \in V_k} X_{k-2}^z \cup \bigcup _{z \in V_1} X_{k-1}^z ,\\
\vdots\\
V_{k-1}' = V_{k-1} \cup \bigcup _{z\in V_k} X_1^z\cup \bigcup _{z \in V_1} X_{2}^z \cup \cdots \cup \bigcup _{z \in V_{k-2}} X_{k-1}^z,\\
V_k' = V_k \cup \bigcup _{z\in V_1} X_1^z \cup \cdots \cup \bigcup _{z \in V_{k-1}} X_{k-1}^z.
\end{array}
$$

It is easy to see that $V(G\circ H)=V_1' \cup \cdots \cup V_k'$ is an equitable $k$-coloring of $G \circ H$. In this coloring each of $k$ colors is used exactly $n(1+|X_1|+\cdots + |X_{k-1}|)/k$ times. 

\item[\rm{Step 2:}] Suppose Theorem \ref{lrpart} holds for some $l \geq 1$.
\item[\rm{Step 3:}] We have to show that $\chi_=((G \circ^l H)\circ H) \leq k.$ Let us note that if $k|n$ then the cardinality of vertex set of $G \circ^l H$, which is equal to $n(|V(H)|+1)^l$, is also divisable by $k$. So using the inductive hypothesis  we get immediately the thesis.
\end{description}
\end{proof}

Let us note that the bound on the equitable chromatic number given in Theorem \ref{lrpart} holds for corona multiproducts of equitably $k$-colorable graph $G$ on $n$ vertices and $r$-partite graph $H$ where $r \leq k-1$ and $k|n$. Let assume that $G$ fulfills the assumption of Theorem \ref{lrpart} and graph $H$ is $r$-partite where $r < k-1$. We can add extra edges to graph $H$, without adding new vertices, until a new graph $H'$ is $(k-1)$-partite. Then, by Theorem \ref{lrpart} corona $G \circ ^l H'$ is equitably $k$-colorable. Since $G \circ ^l H$ is a subgraph of $ G \circ ^l H'$, it is also equitably $k$-colorable.
\begin{corollary}
Let $G$ be an equitably $k$-colorable graph on $n$  vertices and let $H$ be an $r$-partite graph where $r \leq k-1$. If $k|n$, then for any $l \geq 1$
$$\chi_=(G \circ^l H) \leq k.$$ \label{corrpart}
\end{corollary}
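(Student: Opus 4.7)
The plan is to reduce the case of an arbitrary $r$-partite graph $H$ with $r \leq k-1$ to the case already handled by Theorem~\ref{lrpart}, namely $r = k-1$. The idea is to augment $H$ with additional edges, keeping the same vertex set, so that it becomes exactly $(k-1)$-partite, then invoke Theorem~\ref{lrpart} on this enlarged graph, and finally observe that an equitable coloring of a supergraph remains an equitable coloring of any spanning subgraph.

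More concretely, I would begin by splitting on cases: if $r = k-1$, the result is immediate from Theorem~\ref{lrpart}. Otherwise $r < k-1$, and the first step is to build a $(k-1)$-partite graph $H'$ on the same vertex set $V(H)$ whose edge set contains $E(H)$. This can be done, for example, by taking the given $r$-partition $X_1, \ldots, X_r$ of $V(H)$, declaring $X_{r+1} = \cdots = X_{k-1} = \emptyset$, and adding to $E(H)$ every edge between distinct parts not already present; the resulting $H'$ is visibly $(k-1)$-partite with the same vertex set as $H$.

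Next, I would apply Theorem~\ref{lrpart} to $G$ and $H'$: since $G$ is equitably $k$-colorable on $n$ vertices with $k \mid n$ and $H'$ is $(k-1)$-partite, we obtain $\chi_=(G \circ^l H') \leq k$. The final step is to note that $G \circ^l H$ and $G \circ^l H'$ have the same vertex set (the corona construction only depends on $V(H)$ for the number of copies and their vertex sets), and that every edge of $G \circ^l H$ is an edge of $G \circ^l H'$. Hence $G \circ^l H$ is a spanning subgraph of $G \circ^l H'$, and any equitable $k$-coloring of the latter is automatically a proper equitable $k$-coloring of the former, yielding $\chi_=(G \circ^l H) \leq k$.

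There is essentially no obstacle here; the only point that requires a line of care is verifying that passing from $H$ to $H'$ really does induce a spanning-subgraph relation between $G \circ^l H$ and $G \circ^l H'$ at every level of the recursive corona construction, which follows by an easy induction on $l$ using the definition $G \circ^l H = (G \circ^{l-1} H) \circ H$.
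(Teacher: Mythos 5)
Your proof is correct and takes essentially the same route as the paper: the paper likewise extends $H$ by adding edges to obtain a $(k-1)$-partite graph $H'$ on the same vertex set, applies Theorem~\ref{lrpart} to $G \circ^l H'$, and concludes via the subgraph relation. Your explicit observation that $G \circ^l H$ is a \emph{spanning} subgraph of $G \circ^l H'$ (so the balanced coloring transfers intact) is a detail the paper leaves implicit.
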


\section{Equitable coloring of corona products with cycles} \label{sek_cycle}

We will consider two cases: the first one for even cycles, and the second one for odd cycles.

\begin{theorem}
Let $G$ be an equitably 3-colorable graph on $n \geq 2$ vertices and let $k \geq 2$,  $l \geq 1$. If $3|n$ or $k=2$, then 
$$\chi_=(G \circ^l C_{2k}) = 3.$$ \label{k2l3dividen}
\end{theorem}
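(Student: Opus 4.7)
The plan is to establish the lower bound first and then split the upper bound along the two sub-cases of the hypothesis.

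For the lower bound, I would observe that $G \circ^l C_{2k}$ contains a triangle: any vertex $v$ of $G \circ^{l-1} C_{2k}$ together with any edge $uw$ of the copy of $C_{2k}$ attached to $v$ forms a $K_3$. Hence $\chi_=(G \circ^l C_{2k}) \geq \chi(G \circ^l C_{2k}) \geq 3$.

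When $3 \mid n$, the matching upper bound $\chi_=(G \circ^l C_{2k}) \leq 3$ follows at once from Corollary~\ref{corrpart}: $G$ is equitably 3-colorable on $n$ vertices with $3 \mid n$, and $C_{2k}$ is bipartite, hence $r$-partite with $r=2 \leq 3-1$. No additional work is required in this sub-case.

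When $k=2$, the divisibility $3 \mid n$ is absent, so I would induct on $l$ under the strengthened hypothesis: \emph{$G \circ^l C_4$ admits an equitable 3-coloring $V_1', V_2', V_3'$ whose class sizes satisfy $\bigl||V_i'|-|V_j'|\bigr| \leq 1$}. For the base $l=1$, take an equitable 3-coloring $V_1, V_2, V_3$ of $G$ and write $n_i = |V_i|$. For each $z \in V_i$, the attached $C_4$ has a natural bipartition into two parts of size $2$; I would assign these two parts the two colors of $\{1,2,3\}\setminus\{i\}$, one color per part. This yields a proper 3-coloring in which $|V_i'| = n_i + 2(n-n_i) = 2n - n_i$, so $\bigl||V_i'|-|V_j'|\bigr| = |n_j - n_i| \leq 1$. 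The inductive step applies exactly the same rule with $G\circ^{l-1}C_4$ and its equitable 3-coloring (supplied by the inductive hypothesis) in place of $G$ and $V_1, V_2, V_3$.

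The main obstacle is picking the correct invariant for the induction: the naive statement ``$G \circ^l C_4$ is equitably 3-colorable'' does not obviously iterate through the construction, because one could start from a badly distributed coloring. The key algebraic observation that makes the induction close is that the map $n_i \mapsto 2n - n_i$ preserves all pairwise differences, so the $\leq 1$ imbalance is carried through every iteration unchanged. Once this is noticed, both the base case and the inductive step are routine.
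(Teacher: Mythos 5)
Your proposal is correct and follows essentially the same route as the paper: the lower bound comes from the triangle in $K_1\circ C_{2k}$, the $3\mid n$ case is dispatched by Corollary~\ref{corrpart}, and the $k=2$ case is an induction on $l$ whose base case the paper delegates to \cite{hf} while you supply the explicit $C_4$-coloring (two colors per attached $4$-cycle, giving class sizes $2n-n_i$ and hence preserved pairwise differences). The only quibble is that your ``strengthened hypothesis'' is just the definition of an equitable $3$-coloring, so the naive induction already closes; this does not affect correctness.
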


\begin{proof}
The first part of the theorem, for $3|n$, follows from Theorem \ref{lrpart}. Of course, we cannot use fewer than three colors.

The case when $k=2$ was partially considered in \cite{hf}.  
The authors proved that if $G$ is an equitably 3-colorable graph on $n \geq 2$ vertices, then $\chi_=(G \circ C_4) = 3$. This means that our theorem is true for $l=1$. The farther part of this proof is by induction on the number $l$, similarily as it was in the proof of Theorem \ref{lrpart}.
\end{proof}

We also know that in the remaining cases, i.e. when $G$ is equitably 4-colorable or $3 \nmid n$, we need more than three colors for equitable coloring of $G \circ C_{2k}$ \cite{hf}. 

\begin{theorem}
Let $G$ be an equitably 4-colorable graph on $n \geq 2$ vertices and let $k\geq 3$, $l \geq 1$. Then
$$\chi_=(G \circ^l C_{2k}) \leq 4.$$ \label{evencycle}
\end{theorem}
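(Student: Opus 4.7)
I would attack this by induction on $l$, exactly parallel to the structure used in Theorem~\ref{k2l3dividen}: the substantive content sits in the base case $l=1$, while the inductive step is formal.

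For the base case $l=1$, I would start from an equitable 4-coloring $V_1,V_2,V_3,V_4$ of $G$, so $||V_i|-|V_j||\le 1$. If $4\mid n$, the conclusion is immediate from Corollary~\ref{corrpart} (take $r=2\le 3$, using that $C_{2k}$ is bipartite), so I only need to handle $n\equiv 1,2,3\pmod 4$. For these residues the sizes $|V_i|$ are not all equal, and my plan is to build an equitable 4-coloring of $G\circ C_{2k}$ directly as follows. For each vertex $z\in V_i$, the attached copy $H_z$ of $C_{2k}$ will be properly colored using the three colors $\{1,2,3,4\}\setminus\{i\}$; write $\alpha_j^z$ for the number of vertices of color $j$ inside $H_z$. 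The task is to pick the proper colorings of the individual copies so that the totals $N_j=|V_j|+\sum_{z:\,c(z)\ne j}\alpha_j^z$ satisfy $|N_j-N_{j'}|\le 1$. The crucial observation is that, since $2k\ge 6$, the cycle $C_{2k}$ admits proper colorings with a rich family of size-profiles $(a,b,c)$, including the 2-coloring profile $(k,k,0)$, the balanced profile $(\lceil 2k/3\rceil,\lfloor 2k/3\rfloor,\lfloor 2k/3\rfloor)$ (or a close variant depending on $2k\bmod 3$), and intermediate ones such as $(k,k-1,1)$ or $(k-1,k-1,2)$. Using this flexibility, I would exhibit, for each residue $n\bmod 4$ (with subcases on $2k\bmod 3$), an explicit assignment of a proper coloring to each copy $H_z$ whose aggregate effect cancels the small imbalance of the $|V_i|$'s.

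For the inductive step $l\ge 2$, I would imitate the proof of Theorem~\ref{k2l3dividen} verbatim: writing $G\circ^{l}C_{2k}=(G\circ^{l-1}C_{2k})\circ C_{2k}$, the inductive hypothesis gives that $G\circ^{l-1}C_{2k}$ is itself equitably 4-colorable, and applying the base case $l=1$ to this graph (regarded as the outer factor) yields $\chi_=(G\circ^{l}C_{2k})\le 4$. Note that, because the vertex count $n(2k+1)^{l-1}$ of $G\circ^{l-1}C_{2k}$ need not be divisible by 4 even when $n$ is, the full strength of the base case (including the $4\nmid n$ sub-cases) really is required here.

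The main obstacle will be the combinatorial case analysis in the base case when $n\not\equiv 0\pmod 4$: for each of the three nonzero residues of $n\bmod 4$ and each value of $2k\bmod 3$, I have to specify a concrete mixture of proper colorings of $C_{2k}$ to distribute among the copies so that the totals $N_j$ balance to within one. The expectation is that the richness of admissible profiles for $k\ge 3$ easily absorbs an off-by-one imbalance of the $|V_i|$'s, but bookkeeping the recipes cleanly---while simultaneously verifying properness on every copy---is the technical heart of the argument.
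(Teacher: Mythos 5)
Your overall strategy---reduce to the base case $l=1$, dispose of $4\mid n$ via Corollary~\ref{corrpart}, and for the remaining residues balance the color-class totals by assigning to each copy of $C_{2k}$ a proper coloring with a prescribed size-profile on the three colors missing at the attached vertex, then induct on $l$ by applying the base case to $G\circ^{l-1}C_{2k}$---is exactly the framework of the paper's proof. The gap is that you stop where the theorem's actual content begins: you assert that ``the richness of admissible profiles easily absorbs an off-by-one imbalance'' and promise explicit recipes for each residue of $n\bmod 4$, but you neither exhibit them nor prove they exist. That existence claim \emph{is} the theorem. The paper's proof consists precisely of those recipes: for $3\mid n$ and $4\nmid n$ it orders $V(G)$ cyclically by color, colors blocks of four consecutive copies with rotating profiles $(k,\lceil k/2\rceil,\lfloor k/2\rfloor)$ so that each block contributes equally to all four colors, and then lists explicit profiles for the last few copies in each of the cases $n\equiv 1,2,3\pmod 4$; for $3\nmid n$ it simply invokes the $l=1$ result of \cite{hf}. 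Your inventory of profiles is adequate raw material (indeed any $(a,b,c)$ with $a+b+c=2k$ and $\max\leq k$ is realizable on $C_{2k}$, and the paper never needs more than $k$, $\lceil k/2\rceil$, $\lfloor k/2\rfloor$ and $0$, which also makes your proposed subcases on $2k\bmod 3$ unnecessary), so the plan is salvageable---but as written it is a proof outline with the central construction missing.

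One of your side remarks is also false: if $4\mid n$ then $4\mid n(2k+1)^{l-1}$, so in that case Corollary~\ref{corrpart} already gives the conclusion for every $l$ with no induction at all (this is the paper's Subcase~1.1). The reason the $4\nmid n$ base case is genuinely needed in the inductive step is the opposite phenomenon: $(2k+1)^{l-1}$ is odd, so $n\not\equiv 0\pmod 4$ forces $n(2k+1)^{l-1}\not\equiv 0\pmod 4$ at every level, and the hard subcase can never be escaped by passing to a deeper corona.
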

\begin{proof} Let us consider two cases.
\begin{description}
\item[\rm{Case 1:}] $3 | n$

We consider two subcases, depending on $n \bmod 4$.
\begin{description}
\item[\rm{Subcase 1.1:}] $n \bmod 4 = 0$.

The thesis follows immediately from Corollary \ref{corrpart}.
\item[\rm{Subcase 1.2:}] $n \bmod 4 \neq 0$.

First, we will show that our theorem is true for $l=1$ and then by induction on $l$ we will get the thesis for multicoronas $G \circ ^l C_{2k}$, $l \geq 1$.

Suppose graph $G$ has been colored equitably with 4 colors and the cardinalities of color classes are arranged in non increasing way. We order the vertices of $G$: $v_1, v_2, \ldots, v_n$ in such a way that vertex $v_i$ is colored with color $i \bmod 4$, and we use color 4 instead of 0. 

We color first $4x$ copies of $C_{2k}$ using in the $i$th copy $k$ times color $(i \bmod 4 +1) \bmod 4$, $\lceil k/2 \rceil$ times color $(i \bmod 4 +2) \bmod 4$ and $\lfloor k/2 \rfloor$ times color $(i \bmod 4 +3) \bmod 4$, where $x$ is defined below. In this part each color is used the same number of times. 

\begin{enumerate}[(i)]
\item $n \bmod 4 = 1$.

Since $n$ is a multiple of three, there is $p\geq 0$, such that $n=12p+9$. In this subcase $x$ is defined as $4p+1$. Finally, we have to color vertices in last five copies of $C_{2k}$ in corona $G \circ C_{2k}$ as follows:

\begin{itemize}
\item we color the 1st copy using $k$ times color 2,  $\lceil k/2 \rceil$ times color 3, $\lfloor k/2 \rfloor$ times color 4,
\item we color the 2nd copy using  $k$ times color 1,   $\lceil k/2 \rceil$ times color 4, $\lfloor k/2 \rfloor$ times color 3,
\item we color the 3rd copy using  $k$ times color 1,  $k$ times color 4,
\item we color the 4th copy using  $k$ times color 2,   $\lceil k/2 \rceil$ times color 3, $\lfloor k/2 \rfloor$ times color 1,
\item we color the 5th copy using $k$ times color 3,   $\lceil k/2 \rceil$ times color 4, $\lfloor k/2 \rfloor$ times color 2.
\end{itemize}

\item $n \bmod 4 = 2$.

Since $n$ is a multiple of three, there is $p\geq 0$, such that $n=12p+6$. In this subcase $x$ is also defined as $4p+1$. Finally, we have to color vertices in last two copies of $C_{2k}$ in corona $G \circ C_{2k}$ as follows:
\begin{itemize}
\item we color the 1st copy using $k$ times color 2,  $\lceil k/2 \rceil$ times color 3, $\lfloor k/2 \rfloor$ times color 4,
\item we color the 2nd copy using  $k$ times color 1,   $\lceil k/2 \rceil$ times color 4, $\lfloor k/2 \rfloor$ times color 3.
\end{itemize}

\item $n \bmod 4 = 3$.

Since $n$ is a multiple of three, there is $p\geq 0$, such that $n=12p+3$. In this subcase $x$ is defined as $4p$. Finally, we have to color the vertices in the last three copies of $C_{2k}$ in corona $G \circ C_{2k}$ as follows:
\begin{itemize}
\item we color the 1st copy using $k$ times color 3,  $\lceil k/2 \rceil$ times color 4, $\lfloor k/2 \rfloor$ times color 2,
\item we color the 2nd copy using  $k$ times color 1,   $\lceil k/2 \rceil$ times color 4, $\lfloor k/2 \rfloor$ times color 3,
\item  we color the 3rd copy using  $k$ times color 2, $\lceil k/2 \rceil$ times color 4, $\lfloor k/2 \rfloor$ times color 1.
\end{itemize}
\end{enumerate}
It can be easily checked that each of the above colorings, in all subcases, is an equitable 4-coloring of $G\circ C_{2k}$.
\end{description}

\item[\rm{Case 2:}] $3 \nmid n$.

It follows from \cite{hf} that if $G$ is an equitably 4-colorable graph on $n$ vertices, $n \geq 2$ and $3 \nmid n$, then $\chi_=(G \circ C_{2k}) = 4$ for $k \geq 3$. This means that our theorem is true for $l=1$. Therefore, by induction on $l$, we get the desired result.

\end{description}
\end{proof}

It turns out that in the case when the number of vertices of graph $G$ is not divisible by three, the weak inequality becomes equality.

\begin{theorem}
Let $G$ be an equitably 4-colorable graph on $n \geq 2$ vertices and let $k\geq 3$, $l \geq 1$. If $3 \nmid n$, then
$$\chi_=(G \circ^l C_{2k}) = 4.$$ \label{vahan}
\end{theorem}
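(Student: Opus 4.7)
Theorem \ref{evencycle} already supplies the upper bound $\chi_{=}(G \circ^l C_{2k}) \leq 4$, so the plan is to rule out any equitable 3-coloring of $G \circ^l C_{2k}$ under the hypothesis $3 \nmid n$. The strategy is a peeling argument: I would show that an equitable 3-coloring of a corona with $C_{2k}$ forces a perfectly balanced partition of the base graph, and then iterate this observation $l$ times so as to push the balance condition all the way down to $G$ itself, where it contradicts $3 \nmid n$.

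The first step is a local lemma: if $G'$ is any graph and $G' \circ C_{2k}$ (with $k \geq 3$) admits an equitable 3-coloring, then the three color classes restricted to $V(G')$ have the same cardinality. To prove it I would observe that, for each $v \in V(G')$, the attached copy of $C_{2k}$ must be properly colored using only the two colors in $\{1,2,3\} \setminus \{c(v)\}$; since $C_{2k}$ is an even cycle, such a 2-coloring is unique up to swapping and uses each of those two colors exactly $k$ times. Writing $N' = |V(G')|$ and $m_i$ for the number of vertices of $G'$ of color $i$, the total count of color $i$ in $G' \circ C_{2k}$ is then
$$|V_i| = m_i + k(N' - m_i) = kN' - (k-1)m_i.$$
Equitability yields $(k-1)|m_i - m_j| \leq 1$ for every pair $i,j$, and because $k \geq 3$ this forces $m_1 = m_2 = m_3$.

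Second, I would apply the lemma iteratively. Assume, for contradiction, that $G \circ^l C_{2k}$ has an equitable 3-coloring. Taking $G' = G \circ^{l-1} C_{2k}$, the lemma says that the restriction to $G \circ^{l-1} C_{2k}$ is a perfect equipartition, hence in particular an equitable 3-coloring. Reapplying with $G' = G \circ^{l-2} C_{2k}$, then $G' = G \circ^{l-3} C_{2k}$, and so on, after $l$ steps I would end up with an equitable 3-coloring of $G$ whose color classes each have size exactly $n/3$, which requires $3 \mid n$ and contradicts the hypothesis.

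The step I expect to demand the most care is the iteration, since at every intermediate level one must confirm that the restricted coloring still satisfies the hypotheses of the lemma. Properness is automatic under restriction, and a perfect equipartition is certainly equitable, so the hypotheses renew themselves at each level and the cascade runs without obstruction. The essential combinatorial ingredient is the use of $k \geq 3$ to promote the inequality $(k-1)|m_i - m_j| \leq 1$ into the equality $m_i = m_j$; without this the argument would collapse, which is consistent with the special status of the case $k=2$ treated in Theorem \ref{k2l3dividen}.
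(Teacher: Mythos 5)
Your proof is correct, and it takes a genuinely different route from the paper's. Both arguments hinge on the same local observation --- a copy of $C_{2k}$ attached to a vertex colored $i$ must be alternately $2$-colored with the other two colors, contributing exactly $k$ to each of them, which yields the affine relation $|V_i| = m_i + k(N'-m_i)$ --- but they deploy it in opposite directions. The paper argues \emph{forward}: it starts from a $3$-coloring of $G$, asserts that the induced imbalance at level $l$ is minimized when the coloring of $G$ is equitable, and then computes by induction that even this best case produces a gap of $(k-1)^l \geq 2$ between color classes, so no equitable $3$-coloring arises. You argue \emph{backward}: assuming an equitable $3$-coloring exists at level $l$, your peeling lemma shows the restriction to the level below is a \emph{perfect} equipartition (here $k \geq 3$ is used to upgrade $(k-1)|m_i - m_j| \leq 1$ to $m_i = m_j$), and iterating $l$ times forces $3 \mid n$, contradicting the hypothesis. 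Your version buys two things: it sidesteps the paper's unproved claim that the minimal imbalance is attained by an equitable coloring of $G$ (the paper's induction only treats colorings of $G$ whose classes differ by exactly one, not arbitrary $3$-colorings), and it replaces the explicit $(k-1)^l$ bookkeeping with a clean divisibility contradiction. The only cosmetic omission is that you should note $\chi_=(G\circ^l C_{2k}) \geq 3$ (the corona contains a triangle), so that excluding an equitable $3$-coloring really does push the equitable chromatic number up to $4$; this is immediate.
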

\begin{proof} All we need is the proof that we cannot use fewer colors.
 
If $\chi(G) =4$ then of course $\chi_=(G \circ ^l C_{2k})> 3$, for any $l$. Let us assume that $\chi(G) \leq 3$. Note that any 3-coloring of $G$ uniquely determines a 3-coloring of $G \circ ^l C_{2k}$, $l\geq 2$. When we color the vertices in a copy of $C_{2k}$ linked to a vertex of $G \circ ^{l-1} C_{2k}$, we use two available colors. It is not hard to notice that the difference between cardinalities of color clases is the smallest when 3-coloring of $G$ is strongly equitable. In our case, when $n$ is not divisible by three, a strongly equitable coloring does not exist. 
If the difference between cardinalities of every two color classes of $G$ is greater than or equal to one, any 3-coloring of $G \circ ^ l C_{2k}$ cannot be equitable. This follows from the following reasoning.

We claim that every equitable (not strongly) 3-coloring of $G$ determines 3-coloring of $G \circ^l C_{2k}$ with maximum difference among the color classes equaling to $(k-1)^l$, $l\geq 2$.

We prove the claim by induction on $l$. 
\begin{description}
\item[\rm{Step 1:}] $l=1$.
\begin{enumerate}[(i)]
\item  $n \bmod 3 = 1$

Cardinalities of color classes for colors 1, 2 and 3 are equal to $\lfloor n/3 \rfloor (2k+1)+1$,$\lfloor n/3 \rfloor(2k+1)+k$ and $\lfloor n/3 \rfloor(2k+1)+k$, respectively. The maximum difference between color classes is equal to $(k-1)^1$. Our claim holds.

\item $n \bmod 3 = 2$

Cardinalities of color classes for colors 1, 2 and 3 are equal to $\lfloor n/3 \rfloor(2k+1)+1+k$,$\lfloor n/3 \rfloor(2k+1)+1+k$ and $\lfloor n/3 \rfloor(2k+1)+2k$, respectively. The maximum difference between color classes is also equal to $(k-1)^1$. So, our claim holds also in this case.
\end{enumerate}

\item[\rm{Step 2:}] \emph{Induction hypothesis for $l\geq 1$.}

We assume that maximum difference between color classes in 3-coloring of $G \circ ^l C_{2k}$ is equal to $(k-1)^l$.
\item[\rm{Step 3:}] \emph{The proof that the difference for $l+1$ does not exceed $(k-1)^{l+1}$. }

It is easy to see that we have to compute the difference between cardinalities of color class for color  3 $(|C_3^{l+1}|)$ and color 1 ($| C_1^{l+1}|$). Again, we have to consider two subcases:
\begin{enumerate}[(i)]
\item $n \bmod 3 =1$

\begin{enumerate}[(a)]
\item $|C_1^l| = x$ for some $x$, and $|C_2^l|=|C_3^l| = x + (k-1)^l$.

Let us notice that $|C_1^{l+1}| = |C_1^l| + (|C_2^l|+|C_3^l|) \cdot k = x + 2xk + 2(k-1)^lk$ - we can color only these vertices of copies of $C_{2k}$ in $G \circ ^{l+1} C_{2k}$ with color one that are not adjacent to vertex colored with one in $G \circ ^ l C_{2k}$, while $|C_3^{l+1}| = x+(k-1)^l + (2x+(k-1)^l)k$. The difference $|C_3^{l+1}| - |C_1^{l+1}| = (k-1)^{l+1}$.

\item $|C_1^l| = x$ for some $x$, and $|C_2^l| = |C_3^l| = x - (k-1)^l$.

Analogously to above.
\end{enumerate}

\item $n \bmod 3 = 2$
\begin{enumerate}
\item $|C_1^l| =|C_2^l|= x$ for some $x$, and $|C_3^l| = x + (k-1)^l$.

Let us notice that $|C_1^{l+1}| = |C_1^l| + (|C_2^l|+|C_3^l|) \cdot k = x + 2xk + (k-1)^lk$ - we can color only these vertices of copies of $C_{2k}$ in $G \circ ^{l+1} C_{2k}$ with color one that are not adjacent to vertex colored with one in $G \circ ^ l C_{2k}$, while $|C_3^{l+1}| = x+(k-1)^l + 2xk$. The difference $|C_1^{l+1}| - |C_3^{l+1}| = (k-1)^{l+1}$.

\item $|C_1^l| =|C_2^l|= x$ for some $x$, and $|C3^l| = x - (k-1)^l$.

Analogously to above.
\end{enumerate}
\end{enumerate}
Summing up, even when $3|(2k+1)$ and $G$ is equitably (not strongly) 3-colorable, corona $G \circ^l C_{2k}$, $l \geq 1$ demands four colors to be equitably colored. Hence, our thesis holds.
\end{description}

\end{proof}

Now, we will consider cycles on odd number of vertices.

\begin{theorem}
 Let $G$ be an equitably 4-colorable graph on $n\geq 2$ vertices and let $k\geq 1$. Then for any $l \geq 1$ we have
$$\chi_=\left(G\circ^l C_{2k+1}\right)= 4.$$ \label{colC2k+1}
\end{theorem}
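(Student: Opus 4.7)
The plan is to establish $\chi_{=}(G \circ^l C_{2k+1}) = 4$ by proving matching lower and upper bounds.

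For the lower bound, note that $C_{2k+1}$ is an odd cycle with $\chi(C_{2k+1}) = 3$. In $G \circ^l C_{2k+1}$, each outermost copy of $C_{2k+1}$ together with its linking vertex forms a subgraph in which the linking vertex is adjacent to every cycle vertex; any proper coloring of this subgraph requires at least 4 colors (three for the odd cycle and a fourth for the linking vertex). Hence $\chi(G \circ^l C_{2k+1}) \geq 4$, and so $\chi_{=}(G \circ^l C_{2k+1}) \geq 4$.

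For the upper bound I would proceed by induction on $l$, in the spirit of Theorems~\ref{k2l3dividen} and~\ref{evencycle}. In the base case $l = 1$, I start with an equitable 4-coloring $(V_1, V_2, V_3, V_4)$ of $G$ and color the copies of $C_{2k+1}$ one at a time. A copy attached to a vertex of color $i$ must be properly colored using the three colors in $\{1,2,3,4\} \setminus \{i\}$. Since $C_{2k+1}$ admits proper 3-colorings with distributions $(a,b,c)$ satisfying $a+b+c = 2k+1$ and $1 \leq a, b, c \leq k$---in particular the distribution $(k,k,1)$ in which two colors are used $k$ times each and one color is used once as a ``singleton''---I have the freedom to choose which of the three available colors serves as the singleton. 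Distributing these choices appropriately across the $n_i = |V_i|$ copies of each color class, I arrange that the final color classes of $G \circ C_{2k+1}$ differ in size by at most one. For the inductive step, the same procedure applied with $G \circ^{l-1} C_{2k+1}$ in the role of the base graph---using the equitable 4-coloring guaranteed by the inductive hypothesis---produces an equitable 4-coloring of $G \circ^l C_{2k+1}$.

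The hardest step is verifying that the singleton balancing can always be made to work. Let $M_j$ denote the total number of attached cycles that designate color $j$ as the singleton. A direct count gives $n'_j = n_j + k(n - n_j) - (k-1) M_j$, where $n_j$ is the size of color class $j$ in the base coloring and $n = \sum_i n_i$. Equitability of the new coloring reduces to making $n_j + M_j$ as balanced as possible subject to $\sum_j M_j = n$ and the transportation feasibility constraints (with $m_{ii} = 0$). When $n$ is even, the choice $M_j = n/2 - n_j$ makes all $n'_j$ equal; when $n$ is odd, or when $(k,k,1)$ alone gives too coarse a range, I would mix in a bounded number of copies colored with an alternative admissible distribution such as $(k, k-1, 2)$, which lets the counts be shifted by smaller increments. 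A short case analysis on the residue of $n$ then completes the proof.
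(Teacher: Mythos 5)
Your proof is correct in outline, and your lower bound is exactly the paper's: $K_1\circ C_{2k+1}$, a wheel over an odd cycle, is a subgraph of $G\circ^l C_{2k+1}$ and already forces four colors. For the upper bound you take a genuinely different route. The paper dispatches it in one line by citing Corollary~\ref{corrpart} (with $H=C_{2k+1}$ viewed as a $3$-partite graph), whereas you construct the equitable $4$-coloring directly, coloring each attached cycle with a $(k,k,1)$-distribution of the three colors missing at its root and balancing by choosing which available color is the singleton; your count $n'_j=n_j+k(n-n_j)-(k-1)M_j=kn-(k-1)(n_j+M_j)$ is right. Your route buys something concrete: Corollary~\ref{corrpart} formally carries the hypothesis $4\mid n$, which Theorem~\ref{colC2k+1} does not assume, so the paper's citation literally covers only the divisible case, while your construction handles every $n\ge 2$ (and in the inductive step the base graph has $n(2k+2)^{l-1}$ vertices, which is even, so the clean choice $M_j=n'/2-n'_j$ always applies there). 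What it costs is that the one genuinely nontrivial point --- the joint realizability of the $M_j$ as a transportation problem with the diagonal forbidden, plus the unit shifts via $(k,k-1,2)$ when $n$ is odd --- is asserted rather than verified. It does go through: with four colors the feasibility condition collapses to $0\le M_j\le n-n_j$ and $\sum_j M_j=n$; for $k\le 2$ no alternative distribution exists, but then $k-1\le 1$ and the $(k,k,1)$-only coloring is already equitable; and for $k\ge 3$ with $n$ odd you need to move only about $k-1$ units in total, while each copy avoiding both endpoints of a shift can absorb up to $k-1$ of them. You should write out that short verification, since it is where all the content of your upper bound lives.
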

\begin{proof}
The inequality $$\chi_=\left(G\circ^l C_{2k+1}\right) \leq 4$$ is true due to Corollary \ref{corrpart}. Since $K_1 \circ C_{2k+1}$ is a subgraph of $G\circ^l C_{2k+1}$, we cannot use fewer colors.
\end{proof}

We have considered equitable coloring of corona product of graphs on at least two vertices and cycles. Now, for the sake of completeness, we consider  equitable colorings of corona products of one isolated vertex and cycles.
We have noticed in \cite{hf} that

\begin{equation}
\chi_=(K_1 \circ C_{m})= \left \{
\begin{array}{ll}
4, & \text{ if } m=3,\\
\left \lceil \frac{m}{2} \right \rceil +1, & \text{ if } m > 3.
\end{array}\right.\label{wzor}\end{equation}

The value of equitable chromatic number of multicorona $K_1 \circ ^l C_m$ can be arbitrarily large for $l=1$ (cf. Equality (\ref{wzor})). The situation changes significantly for larger values of $l$.

\begin{theorem}
Let $m \geq 3$ and $l \geq 2$. Then
$$ \chi_=(K_1 \circ ^l C_m) =
\left \{
\begin{array}{ll}
3, & \text{ if } m =4,\\
4, & \text{ otherwise.}
\end{array}\right.
$$ \label{c5}
\end{theorem}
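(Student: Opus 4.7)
The plan treats $m = 4$ and $m \neq 4$ separately, establishing matching lower and upper bounds in each case.

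For $m = 4$, the lower bound $\chi_= \geq 3$ is immediate from the triangles in $K_1 \circ C_4$. For the upper bound I would induct on $l$, taking as base the equitable 3-coloring of $K_1 \circ C_4$ with class sizes $(1, 2, 2)$: the center receives color 1 and the cycle alternates 2 and 3. The inductive step extends any equitable 3-coloring of $K_1 \circ^l C_4$ to $K_1 \circ^{l+1} C_4$ by alternating, on each newly attached $C_4$, the two colors different from the color of its parent. A direct count gives $a_i^{l+1} = 2 \cdot 5^l - a_i^l$, which negates all pairwise differences and so preserves the invariant $|a_i - a_j| \leq 1$.

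For $m \neq 4$, the lower bound $\chi_= \geq 4$ breaks into two sub-arguments. If $m$ is odd, including $m = 3$, then $K_1 \circ C_m$ already has chromatic number 4, so $\chi_= \geq 4$. If $m$ is even with $m \geq 6$, every 3-coloring of $K_1 \circ^l C_m$ is rigid: each attached $C_m$ must use only the two non-parent colors, and since $m$ is even these are forced to alternate with $m/2$ occurrences of each. The resulting recursion $a_i^l - a_j^l = (1 - m/2)(a_i^{l-1} - a_j^{l-1})$ gives a root-versus-non-root discrepancy of $(m/2 - 1)^l$, which exceeds 1 for $l \geq 2$ and $m \geq 6$, ruling out equitable 3-colorability.

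The upper bound $\chi_= \leq 4$ for $m \neq 4$ is by induction on $l \geq 2$. The inductive step applies Theorem \ref{evencycle} (for even $m \geq 6$) or Theorem \ref{colC2k+1} (for odd $m$) with $G = K_1 \circ^{l-1} C_m$: once $K_1 \circ^{l-1} C_m$ is known to be equitably 4-colorable, one extra corona with $C_m$ preserves that property. The base case $l = 2$ is immediate when $m \in \{3, 5, 6\}$, because then $K_1 \circ C_m$ is already equitably 4-colorable and a single application of the relevant theorem at $l = 1$ finishes it. The main obstacle is the range $m \geq 7$, where $K_1 \circ C_m$ is not equitably 4-colorable (its center must sit alone in a class of size 1, and the other $m$ vertices cannot be distributed into three classes whose sizes all lie in $\{1, 2\}$). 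For this range I would construct an equitable 4-coloring of $K_1 \circ^2 C_m$ by hand: color the center 1, give the level-1 $C_m$ a proper 3-coloring in colors 2, 3, 4, and exploit the flexibility among proper 3-colorings of each of the $m + 1$ level-2 copies of $C_m$ (e.g.\ the $(3,2,2)$ and $(3,3,1)$ patterns available for $C_7$) to solve the small linear system that equalizes the four global color class sizes up to one.
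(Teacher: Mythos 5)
Your proposal is correct and follows essentially the same route as the paper: the $(m/2-1)^l$ discrepancy of the (rigid) unique 3-coloring for even $m\geq 6$ is exactly the paper's $(k-1)^l$ claim, the upper bound is obtained by induction on $l$ via Theorems \ref{evencycle} and \ref{colC2k+1} with an explicit equitable 4-coloring of $K_1\circ^2 C_m$ as base case, and your sketch of that base construction is at the same level of detail as the paper's own (``one can verify\dots''). The only cosmetic differences are that you handle $m=4$ by a self-contained recursion rather than citing Theorem \ref{k2l3dividen}, and you fold $m=3$ into the odd-cycle case rather than invoking Proposition \ref{corcomplete}.
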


\begin{proof}
Let us consider three cases.
\begin{description}
\item[\rm{Case 1:}] $m=3$ 

Since $C_3=K_3$, our thesis follows immediately from Proposition \ref{corcomplete}.

\item[\rm{Case 2:}] $m=4$

By Equality (\ref{wzor}) $\chi_=(K_1 \circ^1 C_{4})=3$. For $l\geq 2$ the truth of our theorem follows from Theorem \ref{k2l3dividen}. 

\item[\rm{Case 3:}] $m\geq 5$

Our proof is by induction on $l$.
\begin{description}
\item[\rm{Step 1:}] For $l=2$ the theorem holds due to the following. We consider two cases depending on the parity of $m$.
\begin{enumerate}[(i)]
\item $m$ is even.

First, we prove that $\chi_=(K_1 \circ ^2 C_{2k}) > 3$. Let us notice that 3-coloring of $K_1 \circ ^2 C_{2k}$ is unique up to permutation of colors. The vertex of $K_1$ is assigned, say, color 1, vertices of $C_{2k}$ in $K_1 \circ C_{2k}$ adjacent to vertex colored 1, must be colored with 2 and 3, alternately. Next, we assign two available colors to vertices in $2k+1$ copies of $C_{2k}$ in $K_1 \circ ^2 C_{2k}$. Cardinalities of color classes in such a 3-coloring are equal to $1+2k \cdot k$ and twice $k+ (k+1) \cdot k$, respectively. It is easy to see that this coloring is not equitable for $k \geq 3$.

Now, we have to prove that there is an equitable 4-coloring of $K_1 \circ ^2 C_{2k}$. The cardinalities of color classes in every such coloring are equal to $\lceil |V(K_1 \circ ^2 C_{2k})| / 4 \rceil = \lceil (2k+1)(2k+1)/4 \rceil$, $\lceil [(2k+1)(2k+1)-1]/4 \rceil$, $\lceil [(2k+1)(2k+1)-2]/4 \rceil$ and $\lceil [(2k+1)(2k+1)-3]/4 \rceil$, respectively.

First, we color the vertex of $K_1$ with color 1, next the vertices of $C_{2k}$ in $K_1 \circ C_{2k}$ with colors 2, 3 and 4 using them $\lceil 2k/3 \rceil$, $\lceil (2k-1)/3 \rceil$ and $\lceil (2k-2)/3 \rceil$ times, respectively. Finally, we color vertices in $2k+1$ copies of $C_{2k}$ using each time three allowed colors. In each copy every allowed color is used  $\lceil 2k/3 \rceil$ or  $\lfloor 2k/3 \rfloor$ times. One can verify that such equitable 4-coloring of $K_1 \circ ^2 C_{2k}$ exists for each $k\geq 3$.

\item $m$ is odd.

Let us notice that $|V(K_1 \circ ^2 C_{2k+1})|=(2k+2)(2k+2)=4(k+1)^2$. This means that each of four colors must be used exactly $(k+1)^2$ times in every equitable coloring.

Graph $K_1 \circ ^2 C_{2k+1}$ consists of $2k+2$ copies of $C_{2k+1}$ joined to vertices of $K_1 \circ C_{2k+1}$ appropriately. The equitable 4-coloring of $K_1 \circ ^2 C_{2k+1}$ is formed as follows:
\begin{itemize}
\item the vertex of $K_1$ is colored with 1
\item the remaining vertices of $K_1 \circ C_{2k+1}$ are assigned colors 2, 3 and 4 with cardinalities equal to $k$, $k$ and 1, respectively
\item the copy of $C_{2k+1}$ in $K_1 \circ ^2 C_{2k+1}$ joined to vertex colored 1 is assigned colors 2, 3 and 4 with cardinalities equal to 1, $k$ and $k$, respectively
\item copies of  $C_{2k+1}$ in $K_1 \circ ^2 C_{2k+1}$ joined to vertex colored 2 are assigned colors 1, 3 and 4 with cardinalities in each cycle equal to 1, $k$ and $k$, respectively
\item copies of  $C_{2k+1}$ in $K_1 \circ ^2 C_{2k+1}$ joined to vertex colored 3 are assigned colors 1, 2 and 4 with cardinalities in each cycle equal to $k$, $k$ and 1, respectively
\item copies of  $C_{2k+1}$ in $K_1 \circ ^2 C_{2k+1}$ joined to vertex colored 4 are assigned colors 1, 2 and 3 with cardinalities in each cycle $C_{2k+1}$ equal to $k$, $k$ and 1, respectively
\end{itemize}

In such a coloring every color is used exactly $(k+1)^2$ times. Let us notice that we cannot use fewer colors, because each graph $K_1 \circ ^2 C_{2k+1}$ includes odd cycle (3-colorable) in which every vertex is joined to the vertex of $K_1$. Hence $\chi_=(K_1 \circ ^2 C_{2k+1}) = 4$.
\end{enumerate}

\item[\rm{Step 2:}] \emph{Induction hypothesis.} Suppose Theorem \ref{c5} holds for some $l\geq 2$.

\item[\rm{Step 3:}] We have to show that $\chi_=((K_1 \circ ^l C_m) \circ C_m)=4$. 

We consider two cases.
\begin{enumerate}[(i)]
\item $m$ is even.

The inequality
$$\chi_=(K_1 \circ ^l C_m) \leq 4$$ is true due to Theorem \ref{evencycle}. We have to proof that the equitable coloring with smaller number of colors does not exist. 

Let us assume, to the contrary, that $\chi_=(K_1 \circ ^l C_{2k}) \leq 3$, $k \geq 3$, $l \geq 2$. Since $\chi(K_1 \circ ^l C_{2k}) = 3$ (graph $K_1 \circ ^l C_{2k}$ contains $K_3$), then we only need to fix equitable 3-coloring. Let us notice that every 3-coloring of $K_1 \circ ^l C_{2k}$ is unique up to permutation of colors. We again claim that in such a 3-coloring the difference between color classes is equal to $(k-1)^l$. The proof of our claim is analogous to the proof of Theorem \ref{vahan}.

It follows that such a coloring is not equitable for $k\geq 3$, a contradiction. 
\item $m$ is odd. 

The thesis follows from Theorem \ref{colC2k+1}.
\end{enumerate}
\end{description}
\end{description}
\end{proof}

\section{Equitable coloring of corona products with paths}\label{sek_path}

Since graph $G \circ ^ l P_m$, $m\geq 2$ contains a triangle and it is a subgraph of $G \circ ^ l C_m$, and $P_m$ is bipartite, Theorems \ref{lrpart} and \ref{k2l3dividen} imply

\begin{corollary}
Let $G$ be an equitably 3-colorable graph on $n$ vertices, and let $m, n, l \geq 2$. If $m=4$ or $3|n$, then
$$\chi_=(G \circ ^l P_{m}) = 3.$$\label{path4}
\end{corollary}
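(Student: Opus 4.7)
The plan is to prove matching upper and lower bounds, exactly along the lines the sentence preceding the corollary hints at.

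For the lower bound $\chi_=(G \circ^l P_m) \geq 3$, I would argue that $G \circ^l P_m$ contains a triangle whenever $n \geq 2$ and $m \geq 2$. Indeed, pick any vertex $v$ in a copy of $G$ (or in any intermediate level of the recursive construction); by the definition of the corona, $v$ is adjacent to every vertex of an attached copy of $P_m$, and since $m \geq 2$ that copy has at least one edge $uw$. Then $\{v,u,w\}$ induces $K_3$, so $\chi(G \circ^l P_m) \geq 3$ and therefore $\chi_=(G \circ^l P_m) \geq 3$.

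For the upper bound I would treat the two hypotheses separately. If $3 \mid n$, then, since $P_m$ is bipartite, it is in particular a $2$-partite graph, i.e.\ $(k-1)$-partite for $k=3$. Applying Theorem~\ref{lrpart} with $k=3$ and $H = P_m$ to the equitably $3$-colorable graph $G$ on $n$ vertices (where $3 \mid n$) gives $\chi_=(G \circ^l P_m) \leq 3$ directly.

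If $m=4$, I would use the spanning-subgraph relationship $P_4 \subseteq C_4$. Because the $l$-corona operation is built by attaching a fresh copy of the second factor to every existing vertex, the inclusion is preserved at every level: $G \circ^l P_4$ has the same vertex set as $G \circ^l C_4$ and is obtained from it by deleting, in each attached $C_4$, the edge that closes the cycle. By Theorem~\ref{k2l3dividen} with $k=2$ we have $\chi_=(G \circ^l C_4) = 3$, and any equitable $3$-coloring of $G \circ^l C_4$ remains a proper coloring on the (spanning) subgraph $G \circ^l P_4$ with identical color class sizes, so it is still equitable. Hence $\chi_=(G \circ^l P_4) \leq 3$, which combined with the lower bound closes the case.

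There is no real obstacle here, as this is a short corollary of two earlier theorems. The only place calling for a moment of care is justifying that the $l$-corona preserves spanning-subgraph containment in its second argument; but this is immediate from the recursive definition $X \circ^l H = (X \circ^{l-1} H) \circ H$, since at each step every existing vertex receives a copy of $H$ (resp.\ of the subgraph), with precisely the same attaching edges.
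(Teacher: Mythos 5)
Your proposal is correct and follows essentially the same route as the paper: the triangle gives the lower bound, Theorem~\ref{lrpart} (with $P_m$ viewed as $2$-partite) handles $3\mid n$, and the spanning-subgraph relation to $G\circ^l C_4$ together with Theorem~\ref{k2l3dividen} handles $m=4$. Your explicit remark that the subgraph is spanning (so color class sizes are preserved) is a small but welcome precision over the paper's one-line justification.
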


It turns out that there are more graphs (corona multiproduct with paths) that can be equitably colored with three colors.

\begin{theorem}
Let $G$ be an equitably 3-colorable graph on $n \geq 2$ vertices and let $l \geq 1$. If $m=2, 3, 5$, then
$$\chi_=(G \circ^l P_{m}) = 3.$$\label{path25}
\end{theorem}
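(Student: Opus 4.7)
I would prove Theorem~\ref{path25} by induction on $l$, doing the real work in the base case and handling $m = 2$ as a separate direct appeal. For the lower bound, $G \circ^l P_m$ contains a triangle for every $m \geq 2$ (any edge of a $P_m$-copy together with its parent vertex in $G \circ^{l-1} P_m$), so $\chi_= \geq 3$. For $m = 2$, $P_2 = K_2$, and since $\chi(G) \leq 3$ Proposition~\ref{corcomplete} gives $\chi_=(G \circ^l K_2) = 3$ at once. For $m \in \{3, 5\}$, the inductive step is essentially free: if $G \circ^l P_m$ is equitably $3$-colorable then (it has at least two vertices) it itself plays the role of ``$G$'' in the base case argument applied to $(G \circ^l P_m) \circ P_m = G \circ^{l+1} P_m$. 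So everything hinges on $l = 1$.

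For the base, I would fix an equitable $3$-coloring $V_1, V_2, V_3$ of $G$ with $n_i = |V_i|$ and $|n_i - n_j| \leq 1$. A copy of $P_m$ attached to $v \in V_i$ must be properly $2$-colored with the two colors of $\{1,2,3\} \setminus \{i\}$; since $P_m$ is a path and only two colors are available, this coloring is forced to be alternating and so, up to swap, is determined by which of the two admissible colors is \emph{dominant} (appearing $\lceil m/2 \rceil$ times). Our freedom is exactly a binary choice per copy. For $m = 3$ (dominant appears twice, other once) the simple cyclic rule ``the dominant color at every $v \in V_i$ is $i - 1 \pmod 3$'' yields $|V_j'| = n_j + 2 n_{j+1} + n_{j-1}$, and the pairwise differences $|V_j'| - |V_k'|$ collapse to differences $n_a - n_b$ with $|n_a - n_b| \leq 1$; equitability follows.

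For $m = 5$ (dominant three times, other twice), the pure cyclic rule does not balance things, so I would refine it: let $\alpha_i$ (resp.\ $\beta_i$) denote the number of $v \in V_i$ whose copy has dominant color $i - 1$ (resp.\ $i + 1$), with $\alpha_i + \beta_i = n_i$. A direct count yields $|V_j'| = 2n - n_j + (\beta_{j-1} + \alpha_{j+1})$, and the equitability requirement forces $T_j := \beta_{j-1} + \alpha_{j+1} = n_j$ for every $j$. This is a small linear system with a single free integer parameter (say $t = \alpha_1$), and its feasibility in nonnegative integers reduces to nonemptiness of an interval of the form
$$\bigl[\max(0,\,n_1 - n_2,\,n_3 - n_2),\ \min(n_1,\,n_3,\,n_1 - n_2 + n_3)\bigr].$$
This is the one spot where I expect the argument to require care: one has to verify nonemptiness of this interval in every residue class of $n \bmod 3$ and, in particular, in the degenerate case $n = 2$ where one of the $n_i$ equals $0$. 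The hypothesis $|n_i - n_j| \leq 1$ squeezes both endpoints into $\{0,1\}$-shifted versions of one another, and a short finite case check (for example, noting $t = \max(0, n_1 - n_2)$ lies in $\{0,1\}$ and is dominated by $n_3$ whenever $n \geq 2$) confirms nonemptiness, yielding the desired equitable $3$-coloring.
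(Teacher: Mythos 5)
Your proposal is correct and follows the same overall skeleton as the paper's proof: induction on $l$ with all substance concentrated in the base case $l=1$, the inductive step being the observation that $G\circ^{l}P_m$ itself satisfies the hypotheses of the base case. The difference is that the paper disposes of the base case entirely by citing its companion paper \cite{hf} for the statement $\chi_=(G\circ P_m)=3$ when $m=2,3,5$, whereas you supply a self-contained construction: reducing $m=2$ to Proposition~\ref{corcomplete}, balancing $m=3$ by the cyclic dominant-color rule (where the class differences indeed collapse to $n_2-n_3$ and its rotations), and setting up the linear system $\beta_{j-1}+\alpha_{j+1}=n_j$ for $m=5$. I verified that your feasibility interval $\bigl[\max(0,n_1-n_2,n_3-n_2),\ \min(n_1,n_3,n_1-n_2+n_3)\bigr]$ is nonempty in every labelling of the $n_i$ with pairwise differences at most one and $n\ge 2$, so the argument goes through; this buys the reader an explicit, checkable coloring that the paper does not exhibit. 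One small caveat: your parenthetical witness $t=\max(0,n_1-n_2)$ does not always lie in the interval --- e.g.\ when $n_1=n_2=q$ and $n_3=q+1$ the lower endpoint is $n_3-n_2=1$ but your $t=0$ --- the correct uniform choice is $t=\max(0,n_1-n_2,n_3-n_2)$, which one then checks against the three upper bounds case by case. This is a slip in the illustrative example only, not in the structure of the argument.
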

\begin{proof}
The authors proved \cite{hf} that if $G$ is an equitably 3-colorable graph on $n \geq 2$ vertices, then $\chi_=(G \circ P_{m}) = 3$ for $m=2, 3, 5$. This means that our theorem holds for $l=1$. The farther part of the proof is by induction on the number $l$.
\end{proof}

In the remaining cases of $m$ we sometimes have to use four colors. 

Since $G \circ^l P_m$, is a subgraph of $G \circ ^l C_m$, using Theorem \ref{evencycle} we get the following
\begin{corollary}
 Let $G$ be an equitably 4-colorable graph on $n\geq 2$ vertices and let $l \geq 1$, $m\geq 6$. Then 
$$\chi_ =\left(G\circ ^l P_{m}\right)\leq 4.$$ \label{wnsciezki}
\end{corollary}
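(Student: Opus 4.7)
The plan is to derive the corollary by passing to the cycle case and exploiting the fact that $G \circ^l P_m$ and $G \circ^l C_m$ share the same vertex set. The path $P_m$ arises from $C_m$ by deleting a single edge without changing the vertex set; applying the corona construction preserves this, because in each recursive step the attached copies of $P_m$ and $C_m$ have identical vertex sets and nested edge sets, and the cross-edges between a vertex of the previous layer and its attached copy are the same in both constructions. A straightforward induction on $l$ therefore yields $V(G \circ^l P_m) = V(G \circ^l C_m)$ together with $E(G \circ^l P_m) \subseteq E(G \circ^l C_m)$, so $G \circ^l P_m$ is a \emph{spanning} subgraph of $G \circ^l C_m$.

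The key observation is that, because the two graphs have the same vertex set, any proper coloring of $G \circ^l C_m$ is automatically a proper coloring of $G \circ^l P_m$ using exactly the same color classes, so equitability is inherited. Thus every upper bound on $\chi_=(G \circ^l C_m)$ transfers verbatim to $\chi_=(G \circ^l P_m)$.

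It then suffices to invoke the cycle results according to the parity of $m$. For even $m \geq 6$, writing $m = 2k$ with $k \geq 3$, Theorem~\ref{evencycle} gives $\chi_=(G \circ^l C_{2k}) \leq 4$. For odd $m \geq 7$, writing $m = 2k+1$ with $k \geq 3$, Theorem~\ref{colC2k+1} gives $\chi_=(G \circ^l C_{2k+1}) = 4$. In both cases the spanning-subgraph transfer yields $\chi_=(G \circ^l P_m) \leq 4$, establishing the corollary.

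No genuine obstacle is anticipated; the content is the single observation that equitable colorings pass to spanning subgraphs since the color-class cardinalities are unchanged. The only small care needed is to split on the parity of $m$ so that the appropriate cycle theorem applies.
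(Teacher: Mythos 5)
Your proof is correct and is essentially the paper's argument: the paper disposes of this corollary in one line by noting that $G \circ^l P_m$ is a subgraph of $G \circ^l C_m$ and invoking Theorem~\ref{evencycle}. You are in fact slightly more careful than the paper, both in spelling out why the subgraph is \emph{spanning} (which is exactly what makes equitability transfer) and in covering odd $m$ via Theorem~\ref{colC2k+1}, a case that the paper's citation of the even-cycle theorem alone does not formally handle.
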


Now, we will consider equitable coloring of corona of $K_1$ and paths.
Since $K_1 \circ P_{m}$ is a fan, we have:
$$\chi_=(K_1 \circ P_{m})=\left\lceil \displaystyle\frac{m}{2}\right\rceil+1$$
for $m \geq 3$.

Since paths are subgraphs of cycles, the following corollary holds.

\begin{corollary}
Let $m,l\geq 2$. Then
$$\chi_=(K_1 \circ ^ l P_m) 
\left \{
\begin{array}{ll}
=3, \text{ if } m=4\\
\leq 4, \text{ otherwise.}
\end{array}\right.$$\label{corpaths}
\end{corollary}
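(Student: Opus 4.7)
The plan is to piggyback entirely on Theorem \ref{c5}, exploiting the fact that replacing each $C_m$ by $P_m$ in the recursive corona construction only deletes edges. For $m \geq 3$, $P_m$ is a spanning subgraph of $C_m$, obtained by removing one edge of the cycle. Since the corona operation preserves spanning-subgraph inclusion at every level of the recursion, $K_1 \circ^l P_m$ is a spanning subgraph of $K_1 \circ^l C_m$. Any equitable $k$-coloring of the supergraph restricts to an equitable $k$-coloring of the spanning subgraph: properness is preserved under edge deletion, and the color class sizes are unchanged. Hence $\chi_=(K_1 \circ^l P_m) \leq \chi_=(K_1 \circ^l C_m)$.

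Combining this monotonicity with Theorem \ref{c5} yields the required upper bounds: for $m=4$ we get $\chi_=(K_1 \circ^l P_4) \leq 3$, and for every other $m \geq 3$ we get $\chi_=(K_1 \circ^l P_m) \leq 4$. The matching lower bound in the $m=4$ case is immediate, since $K_1 \circ^l P_4$ contains a triangle (the vertex of $K_1$ together with any edge of the $P_4$ copy attached at the first level of the corona), so $\chi_=(K_1 \circ^l P_4) \geq \chi(K_1 \circ^l P_4) \geq 3$.

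The only case not addressed by the above is $m=2$, which falls outside the scope of Theorem \ref{c5} because $C_2$ is not a simple graph. Here I would note that $P_2 = K_2$, so $K_1 \circ^l P_2 = K_1 \circ^l K_2$, and apply Proposition \ref{corcomplete} with $\chi(K_1) = 1 \leq 2+1$ to conclude that $\chi_=(K_1 \circ^l P_2) = 3 \leq 4$, consistent with the ``otherwise'' clause. The proof has essentially no obstacle: the subgraph argument is a single line, and the substantive work is already packaged inside Theorem \ref{c5}. The one point worth stating explicitly is that monotonicity under edge deletion preserves equitability and not merely proper colorability, but this is automatic because the partition of $V$ witnessing the equitable coloring is carried over verbatim.
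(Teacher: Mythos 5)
Your proof is correct and follows essentially the same route as the paper, which derives the corollary from Theorem \ref{c5} via the observation that paths are (spanning) subgraphs of cycles. Your write-up is in fact slightly more careful than the paper's one-line derivation, since you make the spanning-subgraph monotonicity of equitable colorability explicit and separately dispose of the $m=2$ case (where no cycle $C_2$ is available) via Proposition \ref{corcomplete}.
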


We can precise this result as follows.

\begin{theorem}
Let $m,l \geq 2$. Then
$$\chi_=(K_1 \circ ^ l P_m) 
\left \{
\begin{array}{ll}
= 3, \text{ if } m=2,3,4,5\\
= 4, \text{ if } m \geq 6 \text{ and } $m$ \text{ even}\\
\leq 4, \text{ otherwise.}
\end{array}\right.$$
\end{theorem}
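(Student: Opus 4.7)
The statement has three regimes: $m \in \{2,3,4,5\}$ (equality with $3$), $m \geq 6$ even (equality with $4$), and $m \geq 7$ odd (upper bound $4$ only). Corollary \ref{corpaths} already supplies $\chi_=(K_1 \circ^l P_m) \leq 4$ for every $m \neq 4$, and pins down $\chi_= = 3$ for $m=4$, so the new content is twofold: (a) tighten the upper bound to $3$ for $m \in \{2,3,5\}$, and (b) prove the lower bound $\chi_= \geq 4$ for $m \geq 6$ even. The lower bound $\chi_= \geq 3$ is universal, since $K_1 \circ^l P_m$ contains a triangle (the root of $K_1$ joined to two adjacent vertices of a copy of $P_m$).

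For task (a), my plan is to bootstrap Theorem \ref{path25} from a small, hand-made base case. For $m=2$, note that $K_1 \circ P_2 = K_3$, which is equitably $3$-colorable on $3$ vertices; writing $K_1 \circ^l P_2 = K_3 \circ^{l-1} P_2$ and invoking Theorem \ref{path25} with $G=K_3$ finishes this subcase. For $m=3$, the fan $K_1 \circ P_3$ has $\chi_= = \lceil 3/2\rceil + 1 = 3$ on $4$ vertices, so Theorem \ref{path25} applied to $G = K_1 \circ P_3$ handles $l \geq 2$. The subcase $m=5$ is the main obstacle in (a), since $\chi_=(K_1 \circ P_5) = 4$ forbids seeding the induction at level $1$. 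I would instead construct by hand an equitable $3$-coloring of $K_1 \circ^2 P_5$ on $36$ vertices (exactly $12$ per color, using that $3 \mid 36$): assign the root color $1$, color the surrounding $P_5$ and the root's attached $P_5$ alternately with $2$ and $3$, and for each first-layer vertex $u$ of color $c$ distribute its attached $P_5$ over the two non-$c$ colors via the $(3,2)$ alternating split. A short linear accounting over the choices of which of the two colors receives the $3$ in each copy shows that a $(12,12,12)$ total is achievable. For $l \geq 3$, apply Theorem \ref{path25} to $G = K_1 \circ^2 P_5$.

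For task (b), write $m=2k$ with $k \geq 3$ and mimic the argument of Theorem \ref{vahan}. Every proper $3$-coloring of $K_1 \circ^l P_{2k}$ is forced up to a permutation of colors: the root receives some color, every attached $P_{2k}$ is barred from its parent's color, and since $P_{2k}$ has even length, proper $2$-coloring on the attached path forces a $(k,k)$ split over the two available colors. If the class sizes after $l$ levels are $(a,b,c)$, then after $l+1$ levels they become
$$
\bigl(a+k(b+c),\; b+k(a+c),\; c+k(a+b)\bigr),
$$
so every pairwise difference scales by $-(k-1)$. Starting from $(1,k,k)$ at $l=1$, induction yields maximum pairwise class-size difference $(k-1)^l \geq 2^l \geq 2$ for $k \geq 3$ and $l \geq 1$. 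Hence no proper $3$-coloring is equitable, and $\chi_=(K_1 \circ^l P_{2k}) \geq 4$.

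Case $m \geq 7$ odd is immediate from Corollary \ref{corpaths}, which is exactly the inequality claimed. The principal technical hurdle is the explicit level-$2$ construction for $m=5$: because Theorem \ref{path25} cannot be seeded from $K_1 \circ P_5$, one must pay the cost of a direct construction one level higher before the inductive machinery takes over.
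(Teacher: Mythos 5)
Your proposal is correct and follows essentially the same route as the paper: explicit small base cases (including the hand-built equitable $3$-coloring of $K_1\circ^2 P_5$ with all classes of size $12$) fed into Theorem \ref{path25}, the upper bounds from Corollaries \ref{corpaths} and \ref{wnsciezki}, and the $(k-1)^l$ class-size-divergence argument of Theorems \ref{vahan}/\ref{c5} for the lower bound when $m\geq 6$ is even. Your seeding of the $m=2,3$ cases at level $1$ via $K_1\circ P_2=K_3$ and the fan $K_1\circ P_3$ is a mild streamlining of the paper's level-$2$ base cases, not a different method.
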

\begin{proof}
\begin{description}
\item[\rm{Case 1:}] $m=2,3,4,5$

Let us start from $l=2$.
\begin{enumerate}[(i)]
\item $m=2$

Any 3-coloring of $K_1 \circ ^ 2 P_2$ leads to cardinalities of all color classes equal to 3. 
\item $m=3$

We are able to color $K_1 \circ ^2 P_3$ with three colors in such a way that cardinalities of color classes are equal to 6, 5 and 5, respectively. 

\item $m=4$

This case was considered in Corollary \ref{corpaths}.

\item $m=5$

We are able to color $K_1 \circ ^2 P_5$ with three colors in such a way that cardinalities of all color classes are equal to 12.
\end{enumerate}

Using Corollary \ref{path4} and Theorem \ref{path25} we get our thesis for $l \geq 2$.

\item[\rm{Case 2:}] \emph{$m \geq 6$ and $m$ is even.}

According to Corollary \ref{wnsciezki} we only have to prove that we cannot use a smaller number of colors. The argument is the same as in the proof of Theorem \ref{c5}.
\end{description}

\end{proof}

\section{Conclusion} \label{sek_conc}

In the paper we have given some results concerning $l$-corona products that confirm  Equitable Coloring Conjecture. It turns out that the ECC conjecture follows for every $l$-corona product $G \circ ^l H$, where graph $H$ is on $m$ vertices and graph $G$ can be properly colored with $m-1$ colors. Moreover we have established some special cases of $l$-corona products $G \circ ^l H$ that can be colored with 3 or 4 colors efficiently.  
The main of our results are summarized in Table \ref{pods}.
\begin{table}[htb]
\begin{center}
\begin{tabular}{|c|*{7}{c|}}\hline
\multicolumn{2}{|c|}{\multirow{2}{*}{\backslashbox[60mm]{$G$}{$H$}}} & 

\multicolumn{1}{|c|}{\makebox[3em]{bipartite}} & \multicolumn{2}{|c|}{\makebox[5em]{even cycles $C_{2k}$}}&
\multicolumn{1}{|c|}{\makebox[0.2em]{odd}}
& \multicolumn{2}{|c|}{\makebox[5em]{paths $P_k$}}
\\\cline{4-5} \cline{7-8}
\multicolumn{1}{|c}{}& & graphs
&$k=2$ & $k \geq 3$ &cycles& $2 \leq k \leq 5$  &  $ k \geq 6$ \\
\hline\hline

equitably 3-colorable  & $3|n$ &3&\multirow{2}{*}{3} & 3 & \multirow{2}{*}{4}&\multirow{2}{*}{3}&3\\ 
\cline{2-3} \cline{5-5} \cline{8-8}
graph $G$ on $n \geq 2$ vertices & $3 \nmid n$ &&&4&&&4\\
\hline
equitably 4-colorable 
 & $3|n$& $\leq 4$ &\multirow{2}{*}{$\leq 4$}&$ \leq 4$ &\multirow{2}{*}{4} & \multirow{2}{*}{$ \leq 4$} & \multirow{2}{*}{$ \leq 4$}\\ 
\cline{2-3} \cline{5-5} 
graph $G$ on $n \geq 2$ vertices &$3\nmid n$ & & &4 &&&\\
\hline
\end{tabular}
\caption{Possible values of the equitable chromatic number of coronas $G \circ ^ l H$. }\label{pods}
\end{center}
\end{table}

Of course, the complexity of equitable coloring of $G \circ^l H$ depends on the complexity of equitable 3- or 4-coloring of graph $G$, which is generally NP-hard. More precisely, since the time spend to color any vertex of $H$ is constant, such a coloring of graphs under consideration can be done in time $O(g(n) \cdot n')$, where $g(n)$ is the complexity of equitable 3- or 4-coloring of $n$-vertex graph $G$ and $n'$ is the number of vertices in the remaining part of $G \circ ^l H$. However, the following graphs: 
\begin{itemize}
\item broken spoke wheels \cite{m}, 
\item reels \cite{m},
\item cubic graphs except $K_4$ \cite{cubic},
\item some graph products \cite{furm, prod}
\end{itemize}
admit equitable 3-coloring in polynomial time, and so do the corresponding multicoronas.

\end{document}